\documentclass[runningheads,envcountsame]{llncs}
\usepackage{fullpage}
\usepackage[utf8]{inputenc}
\usepackage{amsmath}
\usepackage{enumitem}
\usepackage{setspace}
\usepackage{graphicx}
\usepackage{microtype}
\usepackage[colorlinks]{hyperref}
\usepackage[capitalize]{cleveref}

\usepackage{xcolor}
\definecolor{teigiIro}{HTML}{5700B5}
\newcommand*{\teigi}[1]{{\color{teigiIro}\emph{#1}}} 

\usepackage{xcolor}
\newcommand{\XSays}[3]{{\color{#2}
      {$\rule[-0.12cm]{0.2in}{0.5cm}$\fbox{\tt
            #1:} }\itshape #3
      \marginpar{\color{#2}\tt #1}\def\comment{#3}\def\empty{}\ifx\comment\empty\else
      {$\rule[0.1cm]{0.3in}{0.1cm}$\fbox{\tt
            end}$\rule[0.1cm]{0.3in}{0.1cm}$} \fi
   }}

\usepackage[linesnumbered,ruled,vlined]{algorithm2e}
	\SetKwComment{Comment}{$\triangleright$\ }{}
\SetKwProg{Function}{function}{}{}

\usepackage[draft]{fixme}
\fxsetup{theme=color,mode=multiuser}
\FXRegisterAuthor{hb}{ahb}{HB}
\FXRegisterAuthor{tk}{atk}{TK}

\fxsetface{env}{}

\newcommand*{\LCS}{\textrm{LCS}}

\newcommand{\UnaryOperator}[2][]{\ifx&#1&\ensuremath{\mathop{}\mathopen{}#2\mathopen{}}\else \ensuremath{\mathop{}\mathopen{}#2\mathopen{}(#1)}\fi }
\newcommand{\mathup}[1]{#1}

\newcommand{\Oh}[1]{\UnaryOperator[#1]{\mathcal{O}}}

\newcommand{\Ot}[1]{\UnaryOperator[#1]{\mathup{\Theta}}}

\crefname{algocf}{alg.}{algs.}
\Crefname{algocf}{Algorithm}{Algorithms}

\bibliographystyle{splncs04}

\title{Computing Longest (Common) Lyndon Subsequences} 

\author{
        Hideo~Bannai\inst{1}\orcidID{0000-0002-6856-5185}
        \and
        Tomohiro~I\inst{2}\orcidID{0000-0001-9106-6192}
        \and
        Tomasz~Kociumaka\inst{3}\orcidID{0000-0002-2477-1702}
        \and
        Dominik~K\"{o}ppl\inst{1}\orcidID{0000-0002-8721-4444}
        \and
        Simon~J.~Puglisi\inst{4}\orcidID{0000-0001-7668-7636}
}
\authorrunning{H. Bannai, T. I, T. Kociumaka, D. K\"{o}ppl, and S. J. Puglisi.}

\institute{
        M\&D Data Science Center, Tokyo Medical and Dental University, Japan\\
        \email{\{hdbn,koeppl\}.dsc@tmd.ac.jp}
        \and
        Department of Artificial Intelligence, Kyushu Institute of Technology, Japan\\
        \email{tomohiro@ai.kyutech.ac.jp}
        \and
        University of California, Berkeley, United States,
        \email{kociumaka@berkeley.edu}
        \and
        Department of Computer Science, Helsinki University, Finland\\
        \email{simon.puglisi@helsinki.fi}
}

\begin{document}
\maketitle
\begin{abstract}
  Given a string~$T$ with length~$n$ whose characters are drawn from an ordered alphabet of size $\sigma$, 
  its longest Lyndon subsequence is a longest subsequence of~$T$ that is a Lyndon word.
  We propose algorithms for finding such a subsequence in \Oh{n^3} time with \Oh{n} space,
  or \emph{online} in \Oh{n^3 \sigma} space and time.
  Our first result can be extended to find the longest common Lyndon subsequence of two strings of length~$n$ in \Oh{n^4 \sigma} time using \Oh{n^3} space.
\end{abstract}
\keywords{Lyndon word, subsequence, dynamic programming}

\section{Introduction}
A recent theme in the study of combinatorics on words has been the generalization of regularity properties from substrings to subsequences.
For example, given a string~$T$ over an ordered alphabet, the longest increasing subsequence problem is to find the longest subsequence of increasing symbols in~$T$~\cite{GBRobinson1938,Schensted61}.
Several variants of this problem have been proposed~\cite{Knuth70,elmasry10longest}.
These problems generalize to the task of finding such a subsequence that is not only present in one string, but common in two given strings~\cite{kutz11faster,ta21computing,he18longest}, which can also be viewed as a specialization of the longest common subsequence problem~\cite{wagner74correction,kiyomi21longest,hirschberg77algorithms}.

More recently, the problem of computing the longest square word that is a subsequence~\cite{kosowski04efficient},
the longest palindrome that is a subsequence~\cite{chowdhury14computing,inenaga18hardness},
the lexicographically smallest absent subsequence~\cite{kosche21absent},
and longest rollercoasters~\cite{DBLP:conf/gd/BiedlCDJL17,DBLP:conf/stacs/GawrychowskiMS19,DBLP:journals/siamdm/BiedlBCLMNS19,DBLP:conf/spire/FujitaNIBT21} have been considered.

Here, we focus on subsequences that are Lyndon, i.e.,
strings that are lexicographically smaller than any of its non-empty proper suffixes~\cite{lyndon54}. Lyndon words are objects of longstanding combinatorial interest, and have also proved to be useful algorithmic tools in various contexts (see, e.g.,~\cite{BIINTT17}).
The longest Lyndon \emph{substring} of a string is the longest factor of 
the Lyndon factorization of the string~\cite{chen58lyndon}, and can be computed in linear time~\cite{duval83lyndon}.
The longest Lyndon \emph{subsequence} of a unary string is just one letter,
which is also the only Lyndon subsequence of a unary string.
A (naive) solution to find the longest Lyndon subsequence is to enumerate all distinct Lyndon subsequences, and pick the longest one. However, the number of distinct Lyndon subsequences can be as large as $2^n$ considering a string of increasing numbers $T = 1 \cdots n$.
In fact, there are no bounds known (except when $\sigma=1$) that bring this number in a polynomial relation with the text length~$n$ and the alphabet size~$\sigma$~\cite{hirakawa21counting},
and thus deriving the longest Lyndon subsequence from all distinct Lyndon subsequences can be infeasible.
In this paper, we focus on the algorithmic aspects of computing this {longest} Lyndon subsequence in polynomial time without the need to consider all Lyndon subsequences.
In detail, we study the problems of computing
\begin{enumerate}
\item the lexicographically smallest (common) subsequence for each length online, cf.~\cref{secLexicographicallySmallest}, and
\item the longest subsequence that is Lyndon, cf.~\cref{secLongestLyndon}, with two variations considering the computation as online, or the restriction that this subsequence has to be common among to given strings.
\end{enumerate}
The first problem serves as an appetizer. 
Although the notions of \emph{Lyndon} and \emph{lexicographically smallest} share common traits, 
our solutions to the two problems are independent,
but we will reuse some tools for the online computation.

\section{Preliminaries}
Let $\Sigma$ denote a totally ordered set of symbols called the alphabet.
An element of $\Sigma^*$ is called a string.
Given a string $S \in \Sigma^*$, we denote its length with $|S|$,
its $i$-th symbol with $S[i]$ for $i \in [1..|S|]$.
Further, we write $S[i..j] = S[i]\cdots S[j]$.
A \teigi{subsequence} of a string~$S$ with length~$\ell$ is a string $S[i_1] \cdots S[i_\ell]$ with $i_1 < \ldots < i_\ell$.

Let $\bot$ be the empty string.
We stipulate that $\bot$ is lexicographically larger than every string of $\Sigma^+$.
For a string $S$, appending $\bot$ to~$S$ yields~$S$.

A string $S \in \Sigma^*$ is a \teigi{Lyndon word}~\cite{lyndon54} if $S$ is lexicographically smaller than all its non-empty proper suffixes.
Equivalently, a string $S$ is a Lyndon word if and only if it is smaller than all its proper cyclic rotations.

The algorithms we present in the following may apply techniques limited to integer alphabets.
However, since the final space and running times are not better than $\Oh{n}$ space and $\Oh{n \lg n}$ time, respectively, 
we can reduce the alphabet of $T$ to an integer alphabet by sorting the characters in $T$ with a comparison based sorting algorithm taking \Oh{n \lg n} time and \Oh{n} space, 
removing duplicate characters, and finally assigning each distinct character a unique rank within $[1..n]$.
Hence, we assume in the following that $T$ has an alphabet of size $\sigma \le n$.

\section{Lexicographically Smallest Subsequence}\label{secLexicographicallySmallest}

As a starter, we propose a solution for the following related problem:
Compute the lexicographically smallest subsequence of~$T$ for each length $\ell \in [1..n]$ online.

\subsection{Dynamic Programming Approach}\label{secFirstApproachLexicographicallySmallest}

The idea is to apply dynamic programming dependent on the length~$\ell$ and the length of the prefix $T[1..i]$
in which we compute the lexicographically smallest subsequence of length~$\ell$.
We show that the lexicographically smallest subsequence of $T[1..i]$ length~$\ell$, denoted by $D[i,\ell]$ is
$D[i-1,\ell]$ or $D[i-1,\ell-1]T[i]$,
where $D[0,\cdot] = D[\cdot,0] = \bot$ is the empty word.
See \Cref{algoLexSmallest} for a pseudo code.

\begin{algorithm}[t]
  \DontPrintSemicolon{}
  $D[0,1] \gets \bot$ \;
  \For(\Comment*[f]{Initialize $D[\cdot,1]$}){$i = 1$ to $n$}{$D[i,1] \gets \min_{j \in [1..i]} T[j] = \min(D[i-1,1],T[i])$ \Comment*{\Oh{1} time per entry}
  }
\For(\Comment*[f]{Induce $D[\cdot,\ell]$ from $D[\cdot,\ell-1]$}){$\ell = 2$ to $n$}{\For(\Comment*[f]{Induce $D[i,\ell]$}){$i = 2$ to $i$}{\lIf{$\ell < i$}{$D[i, \ell] \gets \bot$
      }\lElse{$D[i, \ell] \gets \min(D[i-1,\ell], D[i-1,\ell-1]T[i])$ \label{lineDIEll}
      }
    }
  }
  \caption{Computing the lexicographically smallest subsequence $D[i,\ell]$ in $T[1..i]$ of length~$\ell$.}
  \label{algoLexSmallest}
\end{algorithm}

\begin{figure}[t]
	\begin{minipage}{0.35\linewidth}
  \centering{\includegraphics[scale=0.8]{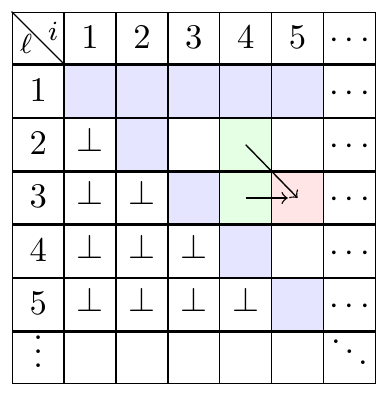}
  }\end{minipage}
	\hfill
	\begin{minipage}{0.55\linewidth}
  \caption{Sketch of the proof of \cref{lemLexSmallestDP}.
    We can fill the fields shaded in blue (the first row and the diagonal) in a precomputation step.
    Further, we know that entries left of the diagonal are all empty.
    A cell to the right of it (red) is based on its left-preceding and diagonal-preceding cell (green).
  }\label{figLexSmallestDP}
	\end{minipage}
\end{figure}

\begin{lemma}\label{lemLexSmallestDP}
  \Cref{algoLexSmallest} correctly computes $D[i,\ell]$, the lexicographically smallest subsequence of $T[1..i]$ with length~$\ell$.
\end{lemma}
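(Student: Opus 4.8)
The plan is to prove the recurrence
\[
D[i,\ell] = \min\bigl(D[i-1,\ell],\, D[i-1,\ell-1]\,T[i]\bigr)
\]
by induction on $i+\ell$, with the boundary conventions $D[0,\cdot]=D[\cdot,0]=\bot$ and $D[i,\ell]=\bot$ whenever $\ell>i$ (no subsequence of that length exists, and $\bot$ is our ``$+\infty$''). The key structural observation is a case split on the last index used by an optimal (lexicographically smallest) length-$\ell$ subsequence of $T[1..i]$: either the optimal subsequence does not use position $i$, or it does. First I would show that \emph{every} length-$\ell$ subsequence of $T[1..i]$ falls into exactly one of two families, and that the minimum over each family is described by one of the two arguments of the $\min$.

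The core of the argument splits into two containments. For the direction ``$D[i,\ell]$ is at most the claimed $\min$'', I would observe that $D[i-1,\ell]$ is itself a length-$\ell$ subsequence of $T[1..i-1]\subseteq T[1..i]$, hence a candidate for $D[i,\ell]$; and $D[i-1,\ell-1]\,T[i]$ is a length-$\ell$ subsequence of $T[1..i]$ obtained by appending $T[i]$ to a subsequence of $T[1..i-1]$, hence also a candidate. Since $D[i,\ell]$ is the minimum over all candidates, it is no larger than either, so no larger than their $\min$. For the reverse direction, I would take an arbitrary length-$\ell$ subsequence $S=T[i_1]\cdots T[i_\ell]$ of $T[1..i]$ and argue that $S$ is at least as large as one of the two arguments: if $i_\ell<i$, then $S$ is actually a subsequence of $T[1..i-1]$, so $S\succeq D[i-1,\ell]$; if $i_\ell=i$, then $S=S'\,T[i]$ where $S'=T[i_1]\cdots T[i_{\ell-1}]$ is a length-$(\ell-1)$ subsequence of $T[1..i-1]$, so $S'\succeq D[i-1,\ell-1]$, and appending the fixed symbol $T[i]$ preserves the order, giving $S\succeq D[i-1,\ell-1]\,T[i]$. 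Taking the minimum over all $S$ yields $D[i,\ell]\succeq\min(D[i-1,\ell],\,D[i-1,\ell-1]\,T[i])$.

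The step I expect to require the most care is the monotonicity claim used in the last case: that $S'\preceq D[i-1,\ell-1]$ implies $S'\,T[i]\preceq D[i-1,\ell-1]\,T[i]$, i.e.\ that appending a common suffix preserves lexicographic order between two strings of equal length. This is where the equal-length invariant (guaranteed because both $S'$ and $D[i-1,\ell-1]$ have length $\ell-1$) is essential: for equal-length strings, lexicographic comparison is determined at the first differing position, which lies within the common prefix part and is unaffected by the appended $T[i]$; without equal lengths the implication can fail. I would also verify the $\bot$ conventions interact correctly with the recurrence—in particular that when $\ell=i$ the diagonal term $D[i-1,\ell-1]\,T[i]$ gives the unique length-$\ell$ subsequence while $D[i-1,\ell]=\bot$, and when $\ell>i$ both terms evaluate to $\bot$—so that the base cases handled by the first loop and the $\ell<i$ branch of \Cref{algoLexSmallest} are consistent with the inductive step.

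Finally, I would tie the recurrence back to the algorithm by noting that \Cref{algoLexSmallest} initializes the first row $D[\cdot,1]$ directly as a running prefix minimum (the length-$1$ case, where the smallest subsequence of $T[1..i]$ is simply $\min_{j\le i}T[j]$), fills in $\bot$ exactly on and below the invalid region $\ell<i$, and otherwise applies line~\ref{lineDIEll}, which is the recurrence just proved. An induction over the evaluation order (increasing $\ell$, then increasing $i$) then shows that every accessed entry on the right-hand side has already been computed correctly, completing the proof.
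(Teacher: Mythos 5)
Your proposal is correct and follows essentially the same route as the paper's proof: induction on $(i,\ell)$ with a case split on whether the optimal length-$\ell$ subsequence of $T[1..i]$ ends at position $i$, reducing each case to $D[i-1,\ell]$ or $D[i-1,\ell-1]T[i]$ via the induction hypothesis (the paper merely phrases the ``$\succeq$'' direction as a contradiction rather than as a direct containment). Your explicit justification of the suffix-appending monotonicity for equal-length strings and of the $\bot$ boundary conventions is a welcome, slightly more careful rendering of steps the paper leaves implicit.
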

\begin{proof}
  The proof is done by induction over the length~$\ell$ and the prefix~$T[1..i]$.
  We observe that $D[i,\ell] = \bot$ for $i < \ell$ and $D[i,i] = T[1..i]$
  since $T[1..i]$ has only one subsequence of length~$i$.
  Hence, for (a) $\ell = 1$ as well as for (b) $i \le \ell$, the claim holds.
  See \cref{figLexSmallestDP} for a sketch.

  Now assume that the claim holds for
  $D[i',\ell']$ with (a) $\ell' < \ell$ and all $i \in [1..n]$, as well as (b) $\ell' = \ell$ and all $i' \in [1..i-1]$.
  In what follows, we show that the claim also holds for $D[i,\ell]$ with $i > \ell > 1$.
  For that, let us assume that $T[1..i]$ has a subsequence~$L$ of length~$\ell$ with $L \prec D[i,\ell]$.

  If $L[\ell] \not= T[i]$, then $L$ is a subsequence of $T[1..i-1]$, and therefore $D[i-1,\ell] \preceq L$ according to the induction hypothesis.
  But $D[i,\ell] \preceq D[i-1,\ell]$, a contradiction.

  If $L[\ell] = T[i]$, then $L[1..\ell-1]$ is a subsequence of $T[1..i-1]$,
  and therefore $D[i-1,\ell-1] \preceq L[1..\ell-1]$ according to the induction hypothesis.
  But $D[i,\ell] \preceq D[i-1,\ell-1]T[i] \preceq L[1..\ell-1]T[i] = L$, a contradiction.
  Hence, $D[i,\ell]$ is the lexicographically smallest subsequence of $T[1..i]$ of length~$\ell$.
\end{proof}

Unfortunately, the lexicographically smallest subsequence of a given length is not a Lyndon word in general,
so this dynamic programming approach does not solve our problem finding the longest Lyndon subsequence.
In fact, if $T$ has a longest Lyndon subsequence of length~$\ell$, then there can be a lexicographically smaller subsequence of the same length. For instance, with $T = \texttt{aba}$, we have the longest Lyndon subsequence \texttt{ab}, while the lexicographically smallest length-2 subsequence is \texttt{aa}.

Analyzing the complexity bounds of \Cref{algoLexSmallest},
we need \Oh{n^2} space for storing the two-dimensional table $D[1..n,1..n]$.
Its initialization costs us $\Oh{n^2}$ time.
Line~\ref{lineDIEll} is executed \Oh{n^2} time.
There, we compute the lexicographical minimum of two subsequences.
If we evaluate this computation with naive character comparisons, for which we need to check \Oh{n} characters,
we pay \Oh{n^3} time in total, which is also the bottleneck of this algorithm.

\begin{lemma}\label{thmDPCubed}
  We can compute the lexicographically smallest substring of $T$ for each length~$\ell$
  online in \Oh{n^3} time with \Oh{n^2} space.
\end{lemma}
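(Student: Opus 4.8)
The plan is to take correctness for granted via \cref{lemLexSmallestDP}, which already certifies that \cref{algoLexSmallest} computes $D[i,\ell]$ correctly, and to devote the argument to an implementation meeting the stated resource bounds while reorganizing the computation to be online. First I would observe that the recurrence $D[i,\ell] = \min(D[i-1,\ell],\, D[i-1,\ell-1]\,T[i])$ depends only on column $i-1$. Hence, swapping the loop order of \cref{algoLexSmallest} so that $i$ is the outer index and $\ell$ the inner one yields an online procedure: immediately after reading $T[i]$ we compute all of $D[i,\cdot]$ from $D[i-1,\cdot]$ and can report the length-$\ell$ answers for the prefix $T[1..i]$.

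The crux of the space bound is that the $\Ot{n}$ characters of each table entry must \emph{not} be stored explicitly, as that would cost $\Ot{n^3}$ space. Instead I would represent every entry $D[i,\ell]$ succinctly by a node holding its last character together with a back-pointer to the node representing its length-$(\ell-1)$ prefix, which is itself some earlier table entry. When the recurrence selects $D[i-1,\ell-1]\,T[i]$, the node for $D[i,\ell]$ stores the character $T[i]$ and a pointer to $D[i-1,\ell-1]$; when it selects $D[i-1,\ell]$, it copies the last character and back-pointer of $D[i-1,\ell]$. Each node then has $\Oh{1}$ size, and since there are $\Oh{n^2}$ entries the whole structure occupies $\Oh{n^2}$ space.

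For the time bound, the only nontrivial operation is the lexicographic comparison on \cref{lineDIEll}. To compare the two candidate strings, each of length $\ell$, I would follow their back-pointer chains to materialize their characters (obtained in reverse order), then compare them front-to-back; this costs $\Oh{\ell} = \Oh{n}$ time and uses $\Oh{n}$ reusable scratch space. As \cref{lineDIEll} is executed $\Oh{n^2}$ times and the initialization of $D[\cdot,1]$ and the diagonal takes $\Oh{n^2}$ time, the total running time is $\Oh{n^3}$.

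The main obstacle I anticipate is reconciling the succinct back-pointer representation with the need to compare strings \emph{left-to-right}: the pointers naturally expose characters from the back, so a direct front-to-back comparison is not immediate. Reconstructing both candidates into length-$\ell$ buffers resolves this within the budget, but one must verify two points to keep the bounds honest: the scratch buffers are reused across comparisons rather than allocated anew (so that the working space stays $\Oh{n}$ and the table stays $\Oh{n^2}$), and all $\Oh{n^2}$ nodes are retained because a back-pointer chain may reach into arbitrarily old columns, so earlier columns cannot be discarded even in the online setting.
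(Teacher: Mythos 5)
Your proof is correct and follows essentially the same route as the paper: the dynamic program of \cref{algoLexSmallest} with naive $\Oh{n}$-character comparisons on \cref{lineDIEll}, executed $\Oh{n^2}$ times for a total of $\Oh{n^3}$, over $\Oh{n^2}$ constant-size cells. You are in fact more explicit than the paper about two details it glosses over --- swapping the loop order so that $i$ is the outer index to make the computation genuinely online, and the back-pointer (LZ78-trie-like) representation of each cell that keeps the table within $\Oh{n^2}$ words, which the paper only formalizes later in \cref{secOrderMaintenance}.
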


\subsection{Speeding Up String Comparisons}\label{secOrderMaintenance}

\newcommand*{\functionname}[1]{{\ensuremath{\renewcommand{\rmdefault}{ptm}\fontfamily{ppl}\selectfont\textrm{\textup{#1}}}}} \newcommand*{\fnPreceding}{\functionname{precedes}}
\newcommand*{\fnInsert}{\functionname{insert}}
\newcommand*{\fnDepth}{\functionname{depth}}
\newcommand*{\fnLevelanc}{\functionname{level-anc}}

Below, we improve the time bound of \cref{thmDPCubed} by representing each cell of $D[1..n,1..n]$ with a node in a trie, which supports the following methods:
\begin{itemize}
  \item $\fnInsert(v,c)$: adds a new leaf to a node~$v$ with an edge labeled with character~$c$, and returns a handle to the created leaf.
  \item $\fnPreceding(u, v)$: returns true if the string represented by the node~$u$ is lexicographically smaller than the string represented by the node~$v$.
\end{itemize}
Each cell of~$D$ stores a handle to its respective trie node.
The root node of the trie represents the empty string~$\bot$, and we associate $D[0,\ell] = \bot$ with the root node for all $\ell$.
A node representing $D[i-1,\ell-1]$ has a child representing $D[i,\ell]$ connected with an edge labeled with~$c$ if $D[i,\ell] = D[i-1,\ell-1]c$,
which is a concept similar to the LZ78 trie.
If $D[i,\ell] = D[i-1,\ell]$, then both strings are represented by the same trie node.
Since each node stores a constant number of words and an array storing its children,
the trie takes \Oh{n^2} space.

\paragraph*{\bf Insert.}
A particularity of our trie is that it stores the children of a node in the order of their creation, i.e.,
we always make a new leaf the last among its siblings.
This allows us to perform \fnInsert{} in constant time by representing the pointers to the children of a node by a plain dynamic array.
When working with the trie, we assure that we do not insert edges into the same node with the same character label (to prevent duplicates).

We add leaves to the trie as follows:
Suppose that we compute $D[i,\ell]$.
If we can copy $D[i-1,\ell]$ to $D[i,\ell]$ (Line~\ref{lineDIEll}),
we just copy the handle of~$D[i-1,\ell]$ pointing to its respective trie node to $D[i,\ell]$.
Otherwise, we create a new trie leaf,
where we create a new entry of~$D$ by selecting a new character ($\ell = 1$), or appending a character to one of the existing strings in~$D$.
We do not create duplicate edges since we prioritize copying to the creation of a new trie node:
For an entry~$D[i,\ell]$, we first default to the previous occurrence~$D[i-1,\ell]$,
and only create a new string~$D[i-1,\ell-1]T[i]$ if $D[i-1,\ell-1]T[i] \prec D[i-1,\ell]$.
$D[i-1,\ell-1]T[i]$ cannot have an occurrence represented in the trie.
To see that, we observe that $D$ obeys the invariants that
(a) $D[i,\ell] = \min_{j \in [1..i]} D[j,\ell]$ 
(where $\min$ selects the lexicographically minimal string)
and (b) all pairs of rows $D[\cdot,\ell]$ and $D[\cdot,\ell']$ with $\ell \neq \ell'$ have different entries.
Since \Cref{algoLexSmallest} fills the entries in $D[\cdot,\ell]$ in a lexicographically non-decreasing order for each length~$\ell$, we cannot create duplicates (otherwise, an earlier computed entry would be lexicographically smaller than a later computed entry having the same length).
The string comparison $D[i-1,\ell-1]T[i] \prec D[i-1,\ell]$ is done by calling $\fnPreceding{}$, which works as follows:

\paragraph*{\bf Precedes.}
We can implement the function \fnPreceding{} efficiently by augmenting our trie with the dynamic data structure of
\cite{cole05dynamic} supporting lowest common ancestor (LCA) queries in constant time
and the dynamic data structure of \cite{dietz91finding} supporting level ancestor queries $\fnLevelanc(u,d)$ returning the ancestor of a node~$u$ on depth~$d$ in amortized constant time.
Both data structures conform with our definition of \fnInsert{} that only supports the insertion of \emph{leaves}.
With these data structures, we can implement $\fnPreceding(u,v)$, by first computing the lowest ancestor~$w$ of $u$ and $v$,
selecting the children $u'$ and $v'$ of~$w$ on the paths downwards to $u$ and $v$, respectively, by two level ancestor queries~$\fnLevelanc(u,\fnDepth(w)+1)$ and~$\fnLevelanc(v,\fnDepth(w)+1)$,
and finally returning true if the label of the edge $(w,u')$ is smaller than of $(w,v')$.

We use \fnPreceding{} as follows for deciding whether $D[i-1,\ell-1]T[i] \prec D[i-1,\ell]$ holds:
Since we know that $D[i-1,\ell-1]$ and $D[i-1,\ell]$ are represented by nodes~$u$ and $v$ in the trie, respectively,
we first check whether $u$ is a child of $v$.
In that case, we only have to compare $T[i]$ with $D[i-1,\ell][\ell]$.
If not, then we know that $D[i-1,\ell-1]$ cannot be a prefix of $D[i-1,\ell]$, and $\fnPreceding(u,v)$ determines whether $D[i-1,\ell-1]$ or the $\ell-1$-th prefix of $D[i-1,\ell]$ is lexicographically smaller.

\begin{theorem}
  We can compute the table $D[1..n,1..n]$
  in \Oh{n^2} time using \Oh{n^2} words of space.
\end{theorem}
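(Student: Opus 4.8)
The plan is to establish the theorem by combining the correctness result from \cref{lemLexSmallestDP} with an amortized-time analysis of the augmented trie data structure described above. The running time of \cref{algoLexSmallest} is dominated by the \Oh{n^2} executions of Line~\ref{lineDIEll}, each of which performs one lexicographic comparison. Having replaced naive character comparisons (costing \Oh{n} time each) by the trie-based \fnPreceding{} operation, it suffices to argue that each iteration now runs in amortized constant time, which immediately yields the claimed \Oh{n^2} total time.

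First I would account for the cost of maintaining the trie. Each cell $D[i,\ell]$ is processed once, and its handling consists of either copying a handle from $D[i-1,\ell]$ (constant time) or invoking one \fnInsert{} to create a fresh leaf. As argued in the preceding text, we never create duplicate edges, so the trie has \Oh{n^2} nodes in total and at most one insertion happens per cell. The \fnInsert{} primitive runs in constant time via the dynamic-array child representation, and both the LCA structure of~\cite{cole05dynamic} and the level-ancestor structure of~\cite{dietz91finding} support leaf insertions in (amortized) constant time, so the total maintenance cost across all \Oh{n^2} cells is \Oh{n^2}.

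Next I would bound the cost of a single comparison $D[i-1,\ell-1]T[i] \prec D[i-1,\ell]$. Let $u$ and $v$ be the trie nodes for $D[i-1,\ell-1]$ and $D[i-1,\ell]$. Checking whether $u$ is a child of $v$ takes constant time; if so, the comparison reduces to comparing the two relevant characters directly. Otherwise we call $\fnPreceding(u,v)$, which performs one LCA query (constant time), two \fnLevelanc{} queries (amortized constant time), a \fnDepth{} evaluation, and a single edge-label character comparison. Hence every comparison costs amortized \Oh{1} time, and over the \Oh{n^2} executions of Line~\ref{lineDIEll} the comparisons contribute \Oh{n^2} time in total.

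The main obstacle I anticipate is justifying that \fnPreceding{} correctly decides the lexicographic order from purely structural information, namely the LCA and the first diverging edge labels. I would argue this in two cases. When $u$ is a child of $v$, the string at $u$ extends that at $v$ by one symbol, so the order is settled by the appended character $T[i]$ against $D[i-1,\ell][\ell]$. Otherwise, the lowest common ancestor~$w$ captures the longest common prefix of the two represented strings, and since neither string is a prefix of the other in this branch, the children $u'$ and $v'$ of~$w$ on the respective downward paths exist and differ; the string whose diverging edge carries the smaller label is lexicographically smaller. Combining correctness of \fnPreceding{} with \cref{lemLexSmallestDP} and the amortized bounds above completes the proof, giving \Oh{n^2} time; the space bound of \Oh{n^2} words follows since the table~$D$ and the augmented trie each occupy \Oh{n^2} words.
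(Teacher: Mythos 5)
Your proposal is correct and follows essentially the same route as the paper: the theorem's proof is exactly the content of the surrounding section, namely that each cell of $D$ is handled by one handle copy or one constant-time \fnInsert{}, that the comparison in Line~\ref{lineDIEll} is resolved by a constant-time prefix check plus one \fnPreceding{} call built from the dynamic LCA and level-ancestor structures, and that the trie and table occupy \Oh{n^2} words. Your added justification that the LCA of two trie nodes identifies the longest common prefix (so the diverging edge labels decide the order) is a welcome elaboration of a step the paper only sketches, but it is not a different argument.
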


\newcommand*{\Stack}{\ensuremath{\mathsf{S}}}

\begin{figure}
  \centering{\includegraphics[scale=1.0]{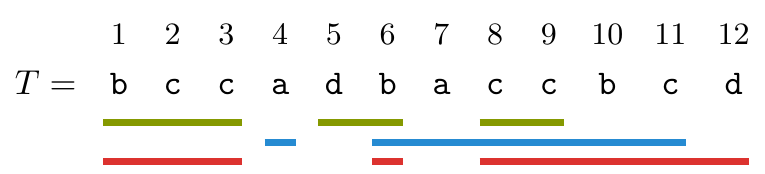}
  }\caption{Longest Lyndon subsequences of prefixes of a text~$T$.
    The $i$-th row of bars below $T$ depicts the selection of characters forming a Lyndon sequence.
    In particular, the $i$-th row corresponds to the longest subsequence of
    $T[1..9]$ for $i=1$ (green), $T[1..11]$ for $i=2$ (blue), and of $T[1..12]$ for $i=3$ (red).
    The first row (green) corresponds also to a longest Lyndon subsequence of $T[1..10]$ and $T[1..11]$.
    Extending the second Lyndon subsequence with $T[12]$ gives also a Lyndon subsequence, but is shorter than the third Lyndon subsequence (red).
    Having only the information of the Lyndon subsequences in $T[1..i]$ at hand seems not to give us a solution for $T[1..i+1]$.
  }
  \label{figCounterExampleDP}
\end{figure}

\newcommand*{\Top}{\textsf{top}}

\subsection{Most Competitive Subsequence}
If we want to find only the lexicographically smallest subsequence for a fixed length~$\ell$,
this problem is also called to
\emph{Find the Most Competitive Subsequence}\footnote{\url{https://leetcode.com/problems/find-the-most-competitive-subsequence/}}.
For that problem, there are linear-time solutions using a stack~\Stack{} storing the lexicographically smallest subsequence of length~$\ell$ for any prefix~$T[1..i]$ with $\ell \le i$.
The idea is to first fill \Stack{} with $1,\ldots,\ell$.
Let $\Top$ denote the top element of \Stack{}.
Given we are at text position $i > \ell$, we recursively pop \Top{} as long as $T[\Top] > T[i]$ and $n-i \ge (\ell - |\Stack|)$.
The latter condition ensures that when we are near the end of the text, we still have enough positions in \Stack{} to fill up \Stack{} with the remaining positions to obtain a sequence of $\ell$ text positions.
Finally, we put $T[i]$ on top of $\Stack{}$ if $|\Stack{}| < \ell$.
Since a text position gets inserted into \Stack{} and removed from \Stack{} at most once, the algorithm runs in linear time.
Consequently, if the whole text $T$ is given (i.e., not online),
this solution solves our problem in the same time and space bounds by running the algorithm for each $\ell$ separately.

\subsection{Lexicographically Smallest Common Subsequence}
Another variation is to ask for the lexicographically smallest subsequence of each distinct length that is common with two strings~$X$ and~$Y$.
Luckily, our ideas of \cref{secFirstApproachLexicographicallySmallest,secOrderMaintenance} can be straightforwardly translated.
For that, our matrix~$D$ becomes a cube $D_3[1..L,1..|X|,1..|Y|]$ with $L := \min(|X|,|Y|)$,
and we set
\[
D_3[\ell,x+1,y+1] = 
\min
\begin{cases}
D_3[\ell-1,x,y]X[x+1] \text{~if~} X[x+1] = Y[y+1], \\
D_3[\ell,x,y+1], \\
D_3[\ell,x+1,y],
\end{cases}
\]
with $D_3[0,\cdot,\cdot] = D_3[\ell,x,y] = \bot$ for all $\ell,x,y$ with $\LCS(X[1..x],Y[1..y]) < \ell$, where $\LCS$ denotes the length of a longest common subsequence of $X$ and $Y$.
This gives us an induction basis similar to the one used in the proof of \cref{lemLexSmallestDP}, such that
we can use its induction step analogously.
The table $D_3$ has $\Oh{n^3}$ cells, and filling each cell can be done in constant time by representing each cell as a pointer to a node in the trie data structure proposed in \cref{secOrderMaintenance}.
For that, we ensure that we never insert a subsequence of $D_3$ into the trie twice.
To see that, let $L \in \Sigma^+$ be a subsequence computed in $D_3$, and let $D_3[\ell,x,y] = L$ be the entry at which we called $\fnInsert$ to create a trie node for $L$ (for the first time).
Then $\ell = |L|$, and 
$X[1..x]$ and $Y[1..y]$ are the shortest prefixes of $X$ and $Y$, respectively, containing $L$ as a subsequence.
Since $D_3[\ell,x,y] = \min_{x' \in [1..x], y' \in [1..y]} D_3[\ell,x',y']$,
all other entries $D_3[\ell,x',y']=L$ satisfy $D_3[\ell,x'-1,y']=L$ or $D_3[\ell,x',y'-1]=L$,
so we copy the trie node handle representing $L$ instead of calling \fnInsert{} when filling out $D_3[\ell,x',y']$.

\begin{theorem}
	Given two strings $X, Y$ of length~$n$,
	we can compute the lexicographically smallest common subsequence for each length $\ell \in [1..n]$
	in $\Oh{n^3}$ time using $\Oh{n^3}$ space.
\end{theorem}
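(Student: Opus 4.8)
The plan is to follow the two-step template already used for the one-string case: first argue that the stated recurrence for $D_3$ correctly identifies the lexicographically smallest common subsequence of each length, and then bound the cost of filling the cube by reusing the trie of \cref{secOrderMaintenance} so that every cell is processed in constant time. Since the cube has $\Oh{n^3}$ cells, a constant-time update per cell immediately yields the $\Oh{n^3}$ time bound, and the trie (at most one node per \fnInsert{}, hence $\Oh{n^3}$ nodes) together with the cube of node handles occupies $\Oh{n^3}$ space.

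For correctness I would mimic the induction of \cref{lemLexSmallestDP}, now carried out jointly over the length~$\ell$ and over the pair of prefix lengths $(x,y)$, ordered so that $D_3[\ell-1,x,y]$, $D_3[\ell,x,y+1]$, and $D_3[\ell,x+1,y]$ are all available when $D_3[\ell,x+1,y+1]$ is computed. The base case is the condition $\LCS(X[1..x],Y[1..y]) < \ell$, under which no common subsequence of length~$\ell$ exists and the entry is $\bot$, playing the same role as the region left of the diagonal in the two-dimensional matrix. For the inductive step, let $L$ be any common subsequence of $X[1..x+1]$ and $Y[1..y+1]$ of length~$\ell$. If $L$ does not simultaneously use both $X[x+1]$ and $Y[y+1]$ as its last matched character, then $L$ is already a common subsequence of $X[1..x]$ and $Y[1..y+1]$, or of $X[1..x+1]$ and $Y[1..y]$, so the hypothesis gives $D_3[\ell,x,y+1] \preceq L$ or $D_3[\ell,x+1,y] \preceq L$. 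Otherwise $X[x+1]=Y[y+1]$, the prefix $L[1..\ell-1]$ is a common subsequence of $X[1..x]$ and $Y[1..y]$, and $D_3[\ell-1,x,y]X[x+1] \preceq L[1..\ell-1]X[x+1] = L$. In every case one of the three candidates is $\preceq L$, and since each non-$\bot$ candidate is itself a valid common subsequence of length~$\ell$, their minimum equals $D_3[\ell,x+1,y+1]$.

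For the running time I would charge each cell the cost of its three-way minimum. Two arguments, $D_3[\ell,x,y+1]$ and $D_3[\ell,x+1,y]$, are already represented by trie nodes, so comparing them is a single \fnPreceding{} call in constant time. The third, the extension $D_3[\ell-1,x,y]X[x+1]$, is compared against an existing node exactly as in \cref{secOrderMaintenance}: test via a child/ancestor check whether one string is a prefix of the other and then compare the single distinguishing character. When this extension is the strict minimum we call \fnInsert{} once, and otherwise we copy the winning handle; both take constant time.

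The step I expect to need the most care is guaranteeing that \fnInsert{} is never invoked twice for the same string, since both the validity and the constant-time behavior of the trie depend on it and, in the cube, a single subsequence~$L$ is reachable from many prefix pairs. I would formalize the argument sketched before the theorem: for each distinct~$L$ there is a unique pair $(x,y)$ of \emph{shortest} prefixes of $X$ and $Y$ that contain $L$ as a common subsequence, which is the only cell where $L$ is created; by the invariant $D_3[\ell,x,y] = \min_{x' \in [1..x],\, y' \in [1..y]} D_3[\ell,x',y']$, any other cell $(\ell,a,b)$ holding~$L$ also satisfies $D_3[\ell,a-1,b]=L$ or $D_3[\ell,a,b-1]=L$ and hence copies an existing handle rather than inserting. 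Confirming that this monotonicity really does route every repeated occurrence of~$L$ through a copy, and that the three-way minimum preserves it, is the crux that makes the constant-time-per-cell accounting sound.
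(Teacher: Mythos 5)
Your proposal is correct and follows essentially the same route as the paper: the same three-way recurrence with the $\LCS$-based $\bot$ base case, an induction modeled on \cref{lemLexSmallestDP}, and constant-time cell updates via the trie of \cref{secOrderMaintenance}, with duplicate insertions ruled out by the monotonicity invariant $D_3[\ell,x,y] = \min_{x' \le x,\, y' \le y} D_3[\ell,x',y']$. You spell out the inductive case analysis and the per-cell comparison costs in more detail than the paper does, but the substance is identical.
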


\section{Computing the Longest Lyndon Subsequence}\label{secLongestLyndon}

In the following, we want to compute the longest Lyndon subsequence of $T$.
See \cref{figCounterExampleDP} for examples of longest Lyndon subsequences.
Compared to the former introduced dynamic programming approach for the lexicographically smallest subsequences,
we follow the sketched solution for the most competitive subsequence using a stack, which here simulates a traversal
of the trie~$\tau$ storing all pre-Lyndon subsequences.
$\tau$ is a subgraph of the trie storing all subsequences, sharing the same root.
This subgraph is connected since, by definition, there is no string~$S$ such that $WS$ forms a pre-Lyndon word for a non-pre-Lyndon word~$W$
(otherwise, we could extend $WS$ to a Lyndon word, and so $W$, too).
We say that the \teigi{string label} of a node~$v$ is the string read from the edges on the path from root to~$v$.
We associate the label~$c$ of each edge of the trie with the leftmost possible position such that the string label~$V$ of~$v$
is associated with the sequence of text positions $i_1 < i_2 < \cdots < i_{|V|}$ and $T[i_1]T[i_2] \cdots T[i_{|V|}] = V$.

\subsection{Basic Trie Traversal}\label{secBasicTrieTraversal}
Problems already emerge when considering the construction of $\tau$
since there are texts like $T = 1 \cdots n$ for which $\tau$ has $\Ot{2^n}$ nodes.
Instead of building~$\tau$, we simulate a preorder traversal on it.
With simulation we mean that we enumerate the pre-Lyndon subsequences of~$T$ in lexicographic order.
For that, we maintain a stack~\Stack{} storing the text positions~$(i_1, \ldots, i_\ell)$ with $i_1 < \cdots < i_\ell$  associated with the path from the root to the node~$v$ we currently visit
i.e.,  $i_1, \ldots, i_\ell$ are the smallest positions with $T[i_1] \cdots T[i_\ell]$ being the string label of~$v$, 
which is a pre-Lyndon word.
When walking down, we select the next text position~$i_{\ell+1}$ such that $T[i_1] \cdots T[i_\ell]T[i_{\ell+1}]$ is a pre-Lyndon word.
If such a text position does not exist,
we backtrack by popping $i_\ell$ from~\Stack{}, and push the smallest text position~$i'_\ell > i_{\ell-1}$ with $T[i'_\ell] > T[i_\ell]$ onto~\Stack{} and recurse.
Finally, we check at each state of~\Stack{} storing the text positions~$(i_1, \ldots, i_\ell)$ whether $T[i_1] \cdots T[i_\ell]$ is a Lyndon word.
For that, we make use of the following facts:

\paragraph*{\bf Facts about Lyndon Words. }
A Lyndon word cannot have a \teigi{border}, that is, a non-empty proper prefix that is also a suffix of the string.
A \teigi{pre-Lyndon word} is a (not necessarily proper) prefix of a Lyndon word.
Given a string $S$ of length $n$, an integer $p \in [1..n]$ is a \teigi{period} of $S$
if $S[i] = S[i+p]$ for all $i \in [1..n-p]$.
The length of a string is always one of its periods.
We use the following facts:
\begin{enumerate}[label=(Fact~\arabic*), ref=\arabic*,leftmargin=*]
  \item Only the length $|S|$ is the period of a Lyndon word~$S$.
  \item The prefix $S[1..|p|]$ of a pre-Lyndon word~$S$ with period~$p$ is a Lyndon word.
        In particular, a pre-Lyndon word~$S$ with period~$|S|$ is a Lyndon word.
        \label{factPreLyndonPeriod}
  \item Given a pre-Lyndon word~$S$ with period~$p$ and a character~$c \in \Sigma$, then
        \begin{itemize}
          \item $Sc$ is a pre-Lyndon word of the same period if and only if $S[|S|-p+1] = c$
          \item $Sc$ is a Lyndon word if and only if $S[|S|-p+1] < c$.
                In particular, if $S$ is a Lyndon word, then $Sc$ is a Lyndon word if and only if $S[1]$ is smaller than $c$.
        \end{itemize}
        \label{factLyndonExtension}
\end{enumerate}

\paragraph*{\bf Checking pre-Lyndon Words. }
Now suppose that our stack~\Stack{} stores the text positions~$(i_1, \ldots, i_\ell)$.
To check whether $T[i_1] \cdots T[i_\ell] c$ for a character $c \in \Sigma$ is a pre-Lyndon word or whether it is a Lyndon word,
we augment each position~$i_j$ stored in~\Stack{} with the period of $T[i_1] \cdots T[i_j]$, for $j \in [1..\ell]$, such that
we can make use of Fact~\ref{factLyndonExtension} to compute the period and check whether $T[i_1] \cdots T[i_j] c$ is a pre-Lyndon word, both in constant time, for $c \in \Sigma$.

\paragraph*{\bf Trie Navigation. }
To find the next text position~$i_{\ell+1}$, we may need to scan \Oh{n} characters in the text, and hence need \Oh{n} time for walking down from a node to one of its children.
If we restrict the alphabet to be integer,
we can augment each text position~$i$ to store the smallest text position~$i_c$ with $i < i_c$ for each character~$c \in \Sigma$ such that
we can visit the trie nodes in constant time per node during our preorder traversal.

This gives already an algorithm that computes the longest Lyndon subsequence with \Oh{n \sigma} space and time
linear to the number of nodes in $\tau$.
However, since the number of number can be exponential in the text length, 
we present ways to omit nodes that do not lead to the solution.
Our aim is to find a rule to judge whether a trie node contributes to the longest Lyndon subsequence to leave certain subtrees of the trie unexplored.
For that, we use the following property:

\begin{lemma}\label{lemTakeLongerLyndon}
  Given a Lyndon word~$V$ and
  two strings~$U$ and $W$
  such that $UW$ is a Lyndon word,
  $V \prec U$, and $|V| \ge |U|$, then
  $VW$ is also a Lyndon word with $VW \prec UW$.
\end{lemma}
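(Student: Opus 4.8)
The plan is to prove the two conclusions in order, since the lexicographic inequality $VW \prec UW$ turns out to be exactly the ingredient needed for half of the Lyndon property. First I would establish $VW \prec UW$. Because $V \prec U$ while $|V| \ge |U|$, the string $V$ cannot be a proper prefix of $U$ (that would force $|V| < |U|$), and $U$ cannot be a prefix of $V$ either (a prefix is lexicographically smaller, contradicting $V \prec U$). Hence the comparison of $V$ and $U$ is resolved at a genuine mismatch position $k \le |U| \le |V|$ with $V[1..k-1] = U[1..k-1]$ and $V[k] < U[k]$. Appending the common suffix~$W$ changes nothing up to position~$k$, so $VW$ and $UW$ agree on their first $k-1$ symbols and satisfy $VW[k] = V[k] < U[k] = UW[k]$, giving $VW \prec UW$.

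Next I would verify that $VW$ is a Lyndon word by showing $VW \prec S$ for every non-empty proper suffix~$S$ of $VW$, grouping the suffixes by where they begin. The suffixes starting inside~$V$ have the form $V[p..|V|]W$ for $p \in [2..|V|]$, and here I would argue purely from $V$ being Lyndon: since a Lyndon word has no border, $V[p..|V|]$ is not a prefix of $V$, and it is too short to have $V$ as a prefix, so the relation $V \prec V[p..|V|]$ is decided at a mismatch position $m \le |V|-p+1$ lying strictly inside~$V$. That same mismatch decides $VW \prec V[p..|V|]W$, because the appended copies of~$W$ are never reached. The only remaining proper suffixes are $W$ itself (beginning at position~$|V|+1$) and the proper suffixes $W[j..|W|]$ of~$W$. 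Each of these is a non-empty proper suffix of the Lyndon word~$UW$, hence strictly larger than~$UW$; combining this with $VW \prec UW$ from the first step yields $VW \prec W$ and $VW \prec W[j..|W|]$ by transitivity. Having covered all three families of suffixes, $VW$ is Lyndon.

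The genuinely delicate point is the family of suffixes internal to~$V$: the crux is to observe that the border-freeness of Lyndon words forces the comparison of $V$ against each of its own suffixes to be settled \emph{before} the appended block~$W$ is reached, so the strict inequality survives the concatenation with the same trailing~$W$ on both sides. Everything else is bookkeeping. I would also flag one implicit hypothesis that the proof quietly uses: $U$ must be non-empty, for otherwise $UW = W$, the suffix~$W$ is no longer \emph{proper}, and the claim fails (e.g.\ $V = \texttt{b}$, $W = \texttt{a}$ makes $W$ Lyndon but $VW = \texttt{ba}$ not).
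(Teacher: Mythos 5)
Your proof is correct and follows essentially the same route as the paper's: the same split of the proper suffixes of $VW$ into those of the form $V'W$ (handled via the border-freeness of the Lyndon word~$V$, which forces the mismatch to occur before $W$ is reached) and those contained in~$W$ (handled as suffixes of the Lyndon word~$UW$), with $VW \prec UW$ resolved at a mismatch inside~$U$. Your observation that $U$ must be non-empty is a fair catch --- the paper's proof tacitly assumes this too, and it holds in every application of the lemma since $U$ there is the non-empty string label of a trie node.
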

\begin{proof}
  Since $V \prec U$ and $V$ is not a prefix of~$U$, $U \succ VW$.
  In what follows, we show that $S \succ VW$ for every proper suffix~$S$ of $VW$.
  \begin{itemize}
    \item If $S$ is a suffix of~$W$, then
          $S \succeq UW \succeq U \succ VW$
          because $S$ is a suffix of the Lyndon word~$UW$.
    \item Otherwise, ($|S| > |W|$), $S$ is of the form $V'W$ for a proper suffix~$V'$ of~$V$.
          Since $V$ is a Lyndon word, $V' \succ V$, and $V'$ is not a prefix of~$V$ (Lyndon words are border-free).
          Hence, $V' W \succeq V' \succ VW$.
  \end{itemize}
\end{proof}
Note that $U$ in \cref{lemTakeLongerLyndon} is a pre-Lyndon word since it is the prefix of the Lyndon word~$UW$.

\newcommand*{\Len}{\ensuremath{\mathsf{L}}}

Our algorithmic idea is as follows:
We maintain an array~$\Len[1..n]$, where
$\Len[\ell]$ is the smallest text position~$i$ 
such that our traversal has already explored a length-$\ell$ Lyndon subsequence of $T[1..i]$.
We initialize the entries of $\Len$ with $\infty$ at the beginning.
Now, whenever we visit a node~$u$ whose string label is a pre-Lyndon subsequence~$U = T[i_1] \cdots T[i_\ell]$ with
$\Len[\ell] \le i_\ell$, then we do not explore the children of~$u$.
In this case, we call $u$ \teigi{irrelevant}.
By skipping the subtree rooted at~$u$, we do not omit the solution due to \cref{lemTakeLongerLyndon}:
When $\Len[\ell] \le i_\ell$, then there is a Lyndon subsequence~$V$ of $T[1..i_\ell]$ with $V \prec U$ (since we traverse the trie in lexicographically order)
and $|V|=|U|$.
Given there is a Lyndon subsequence $UW$ of $T$, then we have already found $VW$ earlier, which is also a Lyndon subsequence of $T$ with $|VW| = |UW|$.

Next, we analyze the complexity of this algorithm, and propose an improved version.
For that, we say that a string is \teigi{immature} if it is pre-Lyndon but not Lyndon.
We also consider a subtree rooted at a node~$u$ as pruned if $u$ is irrelevant, i.e., the algorithm does not explore this subtree. 
Consequently, irrelevant nodes are leaves in the pruned subtree, 
but not all leaves are irrelevant (consider a Lyndon subsequence using the last text position~$T[n]$).
Further, we call a node Lyndon or immature if its string label is Lyndon or immature, respectively. 
(All nodes in the trie are either Lyndon or immature.)

\paragraph*{\bf Time Complexity. }
Suppose that we have the text positions~$(i_1, \ldots, i_\ell)$ on~\Stack{} such that $U := T[i_1] \cdots T[i_\ell]$ is a Lyndon word.
If $\Len[\ell] > i_\ell$, then we lower $\Len[\ell] \gets i_\ell$.
We can lower an individual entry of~$\Len$ at most $n$ times, or at most $n^2$ times in total for all entries.
If a visited node is Lyndon,
we only explore its subtree if we were able to lower an entry of $\Len$.
Hence, we visit at most $n^2$ Lyndon nodes that trigger a decrease of the values in $\Len$.
While each node can have at most $\sigma$ children,
at most one child can be immature.
Since the depth of the trie is at most $n$,
we therefore visit $\Oh{n\sigma}$ nodes
between two updates of $\Len$.
These nodes are leaves (of the pruned trie) or immature nodes.
Thus, we traverse $\Oh{n^3 \sigma}$ nodes in total.

\begin{theorem}
  We can compute the longest Lyndon subsequence
  of a string of length~$n$ 
  in $\Oh{n^3 \sigma}$ time using $\Oh{n \sigma}$ words of space.
\end{theorem}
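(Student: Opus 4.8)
The plan is to establish two things separately: that the pruned preorder traversal described above never discards an optimal solution, and that this traversal visits only $\Oh{n^3\sigma}$ nodes while spending constant time per node and storing only $\Oh{n\sigma}$ words.

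First, correctness. I would argue that the unpruned preorder traversal of $\tau$ enumerates every pre-Lyndon subsequence of $T$ exactly once and in lexicographic order of string labels: each stack state $(i_1,\dots,i_\ell)$ stores the leftmost positions realizing its label, and the child-selection rule (via Fact~\ref{factLyndonExtension}, which tests in constant time whether $T[i_1]\cdots T[i_\ell]c$ is pre-Lyndon or Lyndon) walks to the lexicographically smallest admissible extension first. Among the visited Lyndon nodes we retain the one of maximum depth, which yields a longest Lyndon subsequence provided no optimal node is pruned. The pruning discards the subtree of an irrelevant node $u$ with label $U=T[i_1]\cdots T[i_\ell]$ exactly when $\Len[\ell]\le i_\ell$. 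Here I would invoke \cref{lemTakeLongerLyndon}: the condition $\Len[\ell]\le i_\ell$ certifies an already-seen Lyndon word $V\prec U$ with $|V|=|U|$ occurring within $T[1..i_\ell]$, so every Lyndon extension $UW$ of $U$ has a counterpart $VW$ that is also Lyndon, of equal length, and was explored on an earlier (lexicographically smaller) branch. Hence skipping the subtree of $u$ loses no achievable length.

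Second, the running time, which is the crux. I would use a potential argument on $\Len$. Each of its $n$ entries only ever decreases and ranges over $[1..n]\cup\{\infty\}$, so it can be lowered at most $n$ times, giving at most $n^2$ updates overall. I would then show that the traversal descends below a Lyndon node only immediately after it has lowered some $\Len[\ell]$, and that between two consecutive updates it touches only $\Oh{n\sigma}$ nodes: every node has at most $\sigma$ children but, by Fact~\ref{factLyndonExtension}, at most one \emph{immature} child (the unique extension preserving the period), and the trie has depth at most $n$, so only $\Oh{n\sigma}$ immature nodes and leaves of the pruned trie are visited before the next update fires. Multiplying the $n^2$ updates by the $\Oh{n\sigma}$ nodes charged to each phase gives $\Oh{n^3\sigma}$ node visits; the per-node work — maintaining the period on the stack and selecting the next child via the augmented successor tables — is constant.

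Finally, space, which is routine: the stack $\Stack$ and the array $\Len$ each use $\Oh{n}$ words, while the successor augmentation enabling constant-time navigation stores, for every text position and every character, the next occurrence, for $\Oh{n\sigma}$ words, which dominates. I expect the main obstacle to be the time analysis rather than correctness: one must argue carefully that \emph{every} visited node is chargeable either to a genuine decrease of $\Len$ or to the single-immature-child branching of the phase that such a decrease terminates, and in particular that Fact~\ref{factLyndonExtension} really caps the branching that can occur without progress. Making the informal phrase ``between two updates'' precise — charging each visited node to the next $\Len$-decrease its exploration eventually enables — is where I would spend the most care.
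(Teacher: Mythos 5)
Your proposal follows the paper's argument essentially verbatim: correctness via the pruning rule justified by \cref{lemTakeLongerLyndon}, the time bound by counting at most $n^2$ monotone decreases of $\Len$ and charging $\Oh{n\sigma}$ visited nodes (leaves of the pruned trie and the at most one immature child per node along a depth-$n$ path) to each decrease, and space dominated by the $\Oh{n\sigma}$ next-occurrence tables. The step you flag as needing the most care is exactly the one the paper also treats informally, so there is nothing to add.
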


\subsection{Improving Time Bounds}

We further improve the time bounds by avoiding visiting irrelevant nodes
due to the following observation:
First, we observe that the number of {\em relevant} (i.e., non-irrelevant) nodes that are Lyndon is $\Oh{n^2}$. 
Since all nodes have a depth of at most $n$, the total number of relevant nodes in $\Oh{n^3}$.
Suppose we are at a node~$u$, and~\Stack{} stores the positions $(i_1, \ldots, i_\ell)$ such that $T[i_1] \cdots T[i_\ell]$ is the string label of~$u$. Let $p$ denote the smallest period of $T[i_1]\cdots T[i_\ell]$.
Then we do not want to consider all $\sigma$ children of~$u$, but only those whose edges to~$u$ have a label~$c \geq T[i_{\ell-p+1}]$ such that
$c$ occurs in $T[i_\ell+1..\Len[\ell+1]-1]$ (otherwise, there is already a Lyndon subsequence of length~$\ell+1$ lexicographically smaller than $T[i_1]\cdots T[i_\ell] c$).
In the context of our preorder traversal,
each such child can be found iteratively using range successor queries:
starting from $b = T[i_{\ell-p+1}]-1$,
we want to find the \emph{lexicographically smallest} character~$c > b$
such that $c$ occurs in $T[i_\ell+1..\Len[\ell+1]-1]$.
In particular, we want to find the leftmost such occurrence.
A data structure for finding $c$ in this interval is the wavelet tree~\cite{grossi03wavelet}
returning the position of the \emph{leftmost} such $c$ (if it exists) in $\Oh{\lg \sigma}$ time.
In particular, we can use the wavelet tree instead of the $\Oh{n \sigma}$ pointers to the subsequent characters to arrive at \Oh{n} words of space.
Finally, we do not want to query the wavelet tree each time, but only whenever we are sure that it will lead us to a relevant Lyndon node.
For that, we build a range maximum query (RMQ) data structure on the characters of the text~$T$ in a preprocessing step.
The RMQ data structure of~\cite{bender05lowest} can be built in $\Oh{n}$ time; it answers queries in constant time.
Now, in the context of the above traversal where we are at a node~$u$ with \Stack{} storing $(i_1, \ldots, i_\ell)$,
we query this RMQ data structure for the largest character~$c$ in $T[i_\ell+1..\Len[\ell+1]-1]$
and check whether the sequence $S := T[i_1] \cdots T[i_\ell]c$ forms a (pre-)Lyndon word.
\begin{itemize}
  \item If $S$ is not pre-Lyndon, i.e., $T[i_{\ell-p+1}] > c$ for $p$ being the smallest period of $T[i_1] \cdots T[i_\ell]$,
  we are sure that the children of $u$ cannot lead to Lyndon subsequences~\cite[Prop.~1.5]{duval83lyndon}.
  \item If $S$ is immature, i.e., $T[i_{\ell-p+1}] = c$, 
    $u$ has exactly one child, and this child's string label is $S$.
  Hence, we do not need to query for other Lyndon children.
  \item Finally, if $S$ is Lyndon, i.e., $T[i_{\ell-p+1}] < c$,
  we know that there is at least one child of $u$ that will trigger an update in $\Len$ and thus is a relevant node.
\end{itemize}
This observation allows us to find all relevant children of $u$ (including the single immature child, if any) 
by iteratively conducting $\Oh{k}$ range successor queries, where $k$ is the number of children of $u$ that are relevant Lyndon nodes.
Thus, if we condition the execution of the aforementioned wavelet tree query with an RMQ query result on the same range,
the total number of wavelet tree queries can be bounded by $\Oh{n^2}$.
This gives us $\Oh{n^3 + n^2 \lg \sigma} = \Oh{n^3}$ time for $\sigma = \Oh{n}$ (which can be achieved by an $\Oh{n\log n}$ time re-enumeration of the alphabet in a preliminary step).

\begin{theorem}
  We can compute the longest Lyndon subsequence
  of a string of length~$n$ 
  in \Oh{n^3} time using \Oh{n} words of space.
\end{theorem}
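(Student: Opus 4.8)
The plan is to refine the basic preorder traversal of \cref{secBasicTrieTraversal} so that irrelevant nodes are never visited and the $\Oh{n\sigma}$ table of successor pointers is never stored, replacing both by two static structures built over $T$ in a preprocessing phase. First I would fix the maintained state: the stack \Stack{} holding the current root-to-node path $(i_1,\ldots,i_\ell)$ together with the smallest period of each prefix, so that Fact~\ref{factLyndonExtension} tests each one-character extension in $\Oh{1}$ time, and the threshold array $\Len[1..n]$. Alongside these I would build the wavelet tree of~\cite{grossi03wavelet} over $T$---which, given a position interval and a character bound $b$, returns in $\Oh{\lg\sigma}$ time the leftmost position carrying the smallest character exceeding $b$---and the constant-time range-maximum structure of~\cite{bender05lowest}, returning the leftmost position of the largest character in an interval. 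Each of \Stack{}, $\Len$, the wavelet tree, and the range-maximum structure occupies $\Oh{n}$ words, giving the claimed space; the preprocessing costs $\Oh{n}$ time on top of the $\Oh{n\lg n}$ alphabet reduction from the preliminaries.

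Next I would argue that one range-maximum query classifies each visited node. At a node $u=(i_1,\ldots,i_\ell)$ of smallest period $p$, let $a=T[i_{\ell-p+1}]$ be the threshold of Fact~\ref{factLyndonExtension} and let $c$ be the largest character on the interval $T[i_\ell+1..\Len[\ell+1]-1]$. Restricting the right endpoint to $\Len[\ell+1]-1$ is exactly what forces every extension considered to be \emph{relevant}, and the three cases $c<a$, $c=a$, $c>a$ correspond to ``no admissible extension'', ``only the immature extension'', and ``at least one Lyndon extension''; the separating inequalities are the pre-Lyndon/Lyndon conditions of \cite[Prop.~1.5]{duval83lyndon}. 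In the first two cases no wavelet query is issued at all---the immature child of the second case is handed to us directly as the leftmost maximum---while in the third case I enumerate the relevant children by iterated range-successor queries, starting from the bound $a-1$ and raising the bound past each character returned, stopping once the bound reaches the known maximum $c$. Since every character below $c$ admits a further successor in the interval, no query in this loop is ever wasted.

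The heart of the argument, and where I expect the real work, is the two counting bounds. For the relevant Lyndon nodes I would reuse the monotonicity of $\Len$ exploited in \cref{secBasicTrieTraversal}: each of its $n$ entries is lowered at most $n$ times, and a Lyndon node is explored only when it lowers one---this is where \cref{lemTakeLongerLyndon} guarantees that pruning loses no solution---so there are $\Oh{n^2}$ of them. For the immature nodes I would exploit that Fact~\ref{factLyndonExtension} gives each node at most one immature child, so the immature-child relation is functional and the relevant immature nodes split into vertex-disjoint chains, each rooted at the immature child of a Lyndon node (or the root) and of length at most the trie depth $n$. As there are only $\Oh{n^2}$ possible chain roots, this yields $\Oh{n^3}$ relevant immature nodes and hence $\Oh{n^3}$ relevant nodes in total. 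Each is classified by a single constant-time range-maximum query, so the traversal itself runs in $\Oh{n^3}$ time.

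Finally I would bound the wavelet work separately, since it alone carries the $\lg\sigma$ factor. A range-successor query is issued only in the third case above, where $u$ is guaranteed at least one relevant Lyndon child; hence the number of third-case nodes is itself $\Oh{n^2}$, and the per-node query count equals the number of children found, namely the relevant Lyndon children plus at most one immature child. Summing, the Lyndon children contribute $\Oh{n^2}$ across all nodes and the immature children at most one per third-case node, again $\Oh{n^2}$, so the traversal performs $\Oh{n^2}$ wavelet queries at $\Oh{\lg\sigma}$ each. Adding this $\Oh{n^2\lg\sigma}$ to the $\Oh{n^3}$ traversal cost gives $\Oh{n^3+n^2\lg\sigma}=\Oh{n^3}$ once $\sigma=\Oh{n}$, which is ensured by the alphabet reduction. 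The subtle point to get exactly right is that the range-maximum gate and the range-successor loop must operate on the very same interval $T[i_\ell+1..\Len[\ell+1]-1]$, so that the gate certifies a relevant Lyndon child before any wavelet query is spent.
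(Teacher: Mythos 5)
Your proposal follows essentially the same route as the paper: the same RMQ gate on $T[i_\ell+1..\Len[\ell+1]-1]$ classifying each node into the three cases of Fact~\ref{factLyndonExtension}, the same wavelet-tree range-successor enumeration of relevant children replacing the $\Oh{n\sigma}$ successor pointers, and the same counting ($\Oh{n^2}$ relevant Lyndon nodes from lowering $\Len$, times depth $n$ for the immature chains, plus $\Oh{n^2\lg\sigma}$ wavelet work). Your chain decomposition of the immature nodes and the explicit charging of range-successor queries to the children they return are just slightly more detailed renderings of the paper's argument, so the proof is correct and matches the paper's.
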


In particular, the algorithm computes the lexicographically smallest one among all longest Lyndon subsequences:
Assume that this subsequence~$L$ is not computed, then we did not explore the subtree of the original trie~$\tau$ (before pruning) containing the node with string label~$L$.
Further, assume that this subtree is rooted at an irrelevant node~$u$ whose string label is the pre-Lyndon subsequence~$U$.
Then $U$ is a prefix of $L$, 
and because $u$ is irrelevant (i.e.,  we have not explored $u$'s children), there is a node~$v$ whose string label is a Lyndon word~$V$ with $V \prec U$ and $|V| = |U|$.
In particular, the edge of~$v$ to $v$'s parent is associated with a text position equal to or smaller than the associated text position of the edge between~$u$ and $u$'s parent.
Hence, we can extend $V$ to the Lyndon subsequence $V L[|U|+1]..]$ being lexicographically smaller than $L$, a contradiction.

\subsection{Online Computation}\label{secLyndonOnlineComputation}
If we allow increasing the space usage in order to maintain the trie data structure introduced in \cref{secOrderMaintenance}, 
we can modify our \Oh{n^3 \sigma}-time algorithm of \cref{secBasicTrieTraversal} to perform the computation online, i.e., with $T$ given as a text stream.
For that, we explicitly represent the visited nodes of the trie $\tau$ with an explicit trie data structure~$\tau'$
such that we can create pointers to the nodes.
(In other words, $\tau'$ is a lazy representation of $\tau$.)
The problem is that we can no longer perform the traversal in lexicographic order,
but instead keep multiple fingers in the trie $\tau'$ constructed up so far, and use these fingers to advance the trie traversal in text order.

With a different traversal order, we need an updated definition of $\Len[1..n]$: Now, while the algorithm processes $T[i]$,
the entry $\Len[\ell]$ stores the lexicographically smallest length-$\ell$ Lyndon subsequence of $T[1..i]$ (represented by a pointer to the corresponding node of $\tau'$).
Further, we maintain $\sigma$ lists storing pointers to nodes of $\tau'$.
Initially, $\tau'$ consists only of the root node, and each list stores only the root node.
Whenever we read a new character~$T[i]$ from the text stream,
for each node~$v$ of the \mbox{$T[i]$-th} list, we add a leaf~$\lambda$ connected to $v$ by an edge with label~$T[i]$. Our algorithm adheres to the invariant that $\lambda$'s string label $S$ is a pre-Lyndon word 
so that $\tau'$ is always a subtree of $\tau$.
 If $S$ is a Lyndon word satisfying $S\prec \Len[|S|]$ (which can be tested using the data structure of \cref{secOrderMaintenance}), we further set $\Len[|S|]:=S$. 
This completes the process of updating $\Len[1..n]$.
Next, we clear the $T[i]$-th list and iterate again over the newly created leaves. For each such leaf $\lambda$ with label $S$, we check whether $\lambda$ is relevant, i.e., whether $S \preceq \Len[|S|]$.
If $\lambda$ turns out irrelevant, we are done with processing it.
Otherwise, we put $\lambda$ into the $c$-th list for each character $c \in \Sigma$ such that $Sc$ is a pre-Lyndon word.
By doing so, we effectively create new events that trigger a call-back to the point where we stopped the trie traversal.

Overall, we generate exactly the nodes visited by the algorithm of \cref{secBasicTrieTraversal}.
In particular, there are \Oh{n^3} relevant nodes, and for each such node, we issue \Oh{\sigma} events. The operations of \cref{secOrderMaintenance}
take constant amortized time, so the overall time and space complexity of the algorithm are \Oh{n^3\sigma}.

\begin{theorem}
  We can compute the longest Lyndon subsequence online in \Oh{n^3 \sigma} time using \Oh{n^3 \sigma} space.
\end{theorem}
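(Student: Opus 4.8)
The plan is to take the offline \Oh{n^3\sigma}-time algorithm of \cref{secBasicTrieTraversal} and replace its single lexicographic-order preorder walk of the pre-Lyndon trie $\tau$ — which inherently needs the whole text, since it repeatedly asks for the next occurrence of a character — by a text-driven, event-based simulation that generates exactly the same nodes but in text order. First I would fix the data structures: an explicit \emph{lazy} copy $\tau'$ of $\tau$ holding only the nodes actually generated, augmented with the dynamic LCA and level-ancestor structures of \cref{secOrderMaintenance} so that leaf insertions and \textit{precedes} queries run in constant amortized time. Here I must record that those structures admit leaf insertions in \emph{arbitrary} order, and that \textit{precedes} compares two nodes purely through their branching edge labels, so it is insensitive to the order in which siblings were created. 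I would then reintroduce the redefined array $\Len[1..n]$, where $\Len[\ell]$ points to the node of $\tau'$ whose label is the lexicographically smallest length-$\ell$ Lyndon subsequence of the prefix read so far, together with the $\sigma$ lists, where the $c$-th list holds every relevant node $v$ whose label $S$ satisfies that $Sc$ is pre-Lyndon and which is therefore \emph{waiting} for the next occurrence of $c$.

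Next I would describe the processing of a streamed character $T[i]$ as in the paragraph preceding the theorem, and argue that the list mechanism faithfully emulates the next-occurrence pointers of the offline algorithm: creating the $T[i]$-child of every node in the $T[i]$-th list and then clearing that list realises the greedy \emph{leftmost} occurrence used to canonically associate edge labels with text positions, and guarantees that each (node, character) child is created exactly once, so $\tau'$ never acquires a duplicate edge and stays a subtree of $\tau$. I would also stress that the relevance test of a new leaf with label $S$, namely $S\preceq\Len[|S|]$, is performed only after $\Len$ has been updated from all length-$|S|$ Lyndon leaves created in the same step; since every such leaf ends at position $i$, this two-phase update ensures that $\Len[|S|]$ already holds the lexicographically smallest length-$|S|$ Lyndon subsequence of $T[1..i]$ when the test is evaluated.

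The core of the argument, and the step I expect to be the main obstacle, is the exact correspondence: the set of nodes generated online equals the set of nodes visited by the pruned lexicographic traversal of \cref{secBasicTrieTraversal}, even though the two algorithms populate $\Len$ in different orders. I would prove this by induction on $i$, maintaining the invariant that after $T[1..i]$ has been processed, $\Len[\ell]$ holds the lexicographically smallest length-$\ell$ Lyndon subsequence of $T[1..i]$ for every $\ell$, and that a node lies in $\tau'$ if and only if its canonical (leftmost) occurrence ends at some position $\le i$ and it is relevant in the sense of \cref{secBasicTrieTraversal}. The delicate direction is showing that the two relevance classifications agree: online a leaf with label $S$ is kept iff no length-$|S|$ Lyndon subsequence of $T[1..i]$ is strictly smaller than $S$, whereas offline a node with label $U$ is kept iff no length-$|U|$ Lyndon subsequence of the prefix up to $U$'s end position is strictly smaller than $U$. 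These coincide precisely because the creation position of the online leaf equals that end position, and because \cref{lemTakeLongerLyndon} guarantees that pruning a node superseded by a lexicographically smaller, equally long Lyndon word never discards a longest Lyndon subsequence.

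Finally I would transfer the complexity bound. By the correspondence, $\tau'$ contains exactly the \Oh{n^3\sigma} nodes visited in \cref{secBasicTrieTraversal}; equivalently, there are \Oh{n^3} relevant nodes, and each issues \Oh{\sigma} list-membership events, one per character $c$ with $Sc$ pre-Lyndon. Each event triggers at most one leaf creation, one \textit{precedes} comparison, and one possible update of $\Len$, all in constant amortized time by the data structure of \cref{secOrderMaintenance}, while storing every generated node explicitly costs \Oh{n^3\sigma} space. Hence the algorithm runs online in \Oh{n^3\sigma} time and space, and once the stream ends the answer is read off as a longest $\ell$ for which $\Len[\ell]$ is set, following the node pointer to recover the subsequence itself.
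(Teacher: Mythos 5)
Your proposal follows essentially the same route as the paper's own argument: a lazy explicit trie $\tau'$ with the order-maintenance structures of \cref{secOrderMaintenance}, a redefined $\Len$ array of node pointers, and $\sigma$ waiting lists that replay the pruned traversal of \cref{secBasicTrieTraversal} in text order, with the identical $\Oh{n^3}$-relevant-nodes $\times$ $\Oh{\sigma}$-events accounting. Your added details (the two-phase update of $\Len$ before the relevance test, and the inductive correspondence between online-generated and offline-visited nodes) only make explicit what the paper asserts, so the proof is correct and matches the paper's.
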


\section{Longest Common Lyndon Subsequence}\label{secLongestCommonLyndon}

Given two strings $X$ and $Y$, we want to compute the longest common subsequence of $X$ and $Y$ that is Lyndon.
For that, we can extend our algorithm finding the longest Lyndon subsequence of a single string as follows.
First, we explore in depth-first order the trie of all \emph{common} pre-Lyndon subsequences of $X$ and $Y$.
A node is represented by a pair of positions $(x, y)$ such that,
given the path from the root to a node~$v$ of depth~$\ell$ visits the nodes
$(x_1, y_1), \ldots, (x_\ell, y_\ell)$ with $L = X[x_1] \cdots X[x_\ell] = Y[y_1] \cdots Y[y_\ell]$ being a pre-Lyndon word,
$L$ is neither a subsequence of $X[1..x_\ell-1]$ nor of $Y[1..y_\ell-1]$,
i.e., $x_\ell$ and $y_\ell$ are the leftmost such positions.
The depth-first search works like an exhaustive search in that it tries to extend
$L$ with each possible character in $\Sigma$ having an occurrence in both remaining suffixes
$X[x_{\ell}+1..]$ and $Y[y_{\ell}+1..]$, and then,
after having explored the subtree rooted at~$v$,
visits its lexicographically succeeding sibling nodes (and descends into their subtrees)
by checking whether $L[1..|L|-1]$ can be extended with a character $c > L[|L|]$ appearing in
both suffixes $X[x_{\ell-1}+1..]$ and $Y[y_{\ell-1}+1..]$.

The algorithm uses again the array $\Len$ to check whether we have already found a lexicographically smaller Lyndon subsequence with equal or smaller ending positions in $X$ and $Y$ than the currently constructed pre-Lyndon subsequence.
For that, $\Len[\ell]$ stores not only one position,
but a list of positions $(x,y)$ such that $X[1..x]$ and $Y[1..y]$ have a \emph{common} Lyndon subsequence of length~$\ell$.
Although there can be $n^2$ such pairs of positions, we only store those that are pairwise non-dominated.
A pair of positions $(x_1,y_1)$ is called \teigi{dominated} by a pair $(x_2,y_2) \neq (x_1,y_1)$ if $x_2 \le x_1$ and $y_2 \le y_1$.
A set storing pairs in $[1..n] \times [1..n]$ can have at most $n$ elements that are pairwise non-dominated,
and hence $|\Len[\ell]| \le n$.

At the beginning, all lists of $\Len$ are empty.
Suppose that we visit a node $v$ with pair $(x_\ell,y_\ell)$ representing a common Lyndon subsequence of length~$\ell$.
Then we query whether $\Len[\ell]$ has a pair dominating $(x_\ell,y_\ell)$.
In that case, we can skip $v$ and its subtree.
Otherwise, we insert $(x_\ell,y_\ell)$ and remove pairs in $\Len[\ell]$ that are dominated by $(x_\ell,y_\ell)$.
Such an insertion can happen at most $n^2$ times.
Since $\Len[1..n]$ maintains $n$ lists, we can update $\Len$ at most $n^3$ times in total.
Checking for domination and insertion into $\Len$ takes \Oh{n} time.
The former can be accelerated to constant time by representing $\Len[\ell]$ as an array~$R_\ell$
storing in $R_\ell[i]$ the value $y$ of the tuple $(x,y) \in \Len[\ell]$ with $x \le i$ and the lowest possible $y$, for each $i \in [1..n]$.
Then a pair $(x,y) \not\in \Len[\ell]$ is dominated if and only if $R_\ell[x] \le y$.

\begin{example}
  For $n = 10$, let $\Len_\ell = [(3,9), (5,4), (8,2)]$.
  Then all elements in $\Len_\ell$ are pairwise non-dominated, and
  $R_\ell = [\infty,\infty,9,9,4,4,4,2,2,2]$.
  Inserting $(3,2)$ would remove all elements of $\Len_\ell$, and update all entries of $R_\ell$.
  Alternatively, inserting $(7,3)$ would only involve updating $R_\ell[7] \gets 3$;
  since the subsequent entry $R_\ell[8] = 2$ is less than $R_\ell[7]$, no subsequent entries need to be updated.
\end{example}

An update in $\Len[\ell]$ involves changing \Oh{n} entries of $R_\ell$,
but that cost is dwarfed by the cost for finding the next common Lyndon subsequence that updates $\Len$.
Such a subsequence can be found while visiting $\Oh{n \sigma}$ irrelevant nodes
during a naive depth-first search (cf.\ the solution of \cref{secFirstApproachLexicographicallySmallest} computing the longest Lyndon sequence of a single string).
Hence, the total time is $\Oh{n^4 \sigma}$.

\begin{theorem}
  We can compute the longest common Lyndon subsequence
  of a string of length~$n$ 
  in $\Oh{n^4 \sigma}$ time using $\Oh{n^3}$ words of space.
\end{theorem}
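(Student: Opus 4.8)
The plan is to lift the single-string traversal of \cref{secBasicTrieTraversal} to the product setting by exploring, in depth-first and lexicographic order, the trie of all \emph{common} pre-Lyndon subsequences of $X$ and $Y$. First I would represent every node by the pair $(x_\ell,y_\ell)$ of \emph{leftmost} end positions such that the length-$\ell$ string label $L$ occurs as a subsequence of both $X[1..x_\ell]$ and $Y[1..y_\ell]$ but of neither $X[1..x_\ell-1]$ nor $Y[1..y_\ell-1]$. Descending from such a node means appending a character $c \ge L[\ell-p+1]$ (with $p$ the smallest period), so that $Lc$ stays pre-Lyndon, and visiting the next sibling means replacing $L[\ell]$ by a larger character; in both cases $c$ must occur in the remaining suffixes $X[x_\ell+1..]$ and $Y[y_\ell+1..]$. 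Using Facts~\ref{factPreLyndonPeriod} and~\ref{factLyndonExtension} I can test in constant time whether each candidate keeps the label pre-Lyndon or promotes it to Lyndon, and precomputed next-occurrence indices on $X$ and $Y$ let me locate the candidate positions.

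The decisive new ingredient is the pruning test. In the single-string case $\Len[\ell]$ held one end position and \cref{lemTakeLongerLyndon} justified discarding a node once a lexicographically smaller, equally long Lyndon word ending no later had been seen. Here I would let $\Len[\ell]$ hold the pairs $(x,y)$ for which $X[1..x]$ and $Y[1..y]$ already admit a common length-$\ell$ Lyndon subsequence, keeping only the pairwise non-dominated ones, where $(x_2,y_2)$ dominates $(x_1,y_1)$ iff $x_2\le x_1$ and $y_2\le y_1$. Since a non-dominated set is an antichain in $[1..n]\times[1..n]$, distinct elements have distinct $x$-coordinates and $y$ strictly decreases as $x$ increases, so $|\Len[\ell]|\le n$. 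A node $(x_\ell,y_\ell)$ is pruned exactly when some pair of $\Len[\ell]$ dominates it.

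The heart of the correctness argument is a two-dimensional version of \cref{lemTakeLongerLyndon}, and I expect this to be the main obstacle. If $(x_\ell,y_\ell)$, carrying the pre-Lyndon label $U$, is dominated by a pair witnessing a common Lyndon subsequence $V$ with $|V|=\ell$ and (by the lexicographic traversal order) $V\prec U$, then for any common Lyndon subsequence $UW$ of $X$ and $Y$ I must show that $VW$ is also common, Lyndon, lexicographically smaller, and already explored. \cref{lemTakeLongerLyndon} delivers that $VW$ is Lyndon with $VW\prec UW$; the point where domination is indispensable is \emph{commonness}: $V$ ends at positions $\le x_\ell$ and $\le y_\ell$ in \emph{both} strings, while $W$ is realized inside $X[x_\ell+1..]$ and $Y[y_\ell+1..]$, so concatenating those occurrences yields occurrences of $VW$ in both $X$ and $Y$. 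This is exactly why a single-coordinate comparison no longer suffices and the product order is the correct notion; the remaining bookkeeping is to argue that $\Len[\ell]$ never needs more than its non-dominated pairs.

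Finally I would settle the complexity. To make the domination test constant-time I maintain, for each $\ell$, the array $R_\ell[1..n]$ with $R_\ell[i]=\min\{y:(x,y)\in\Len[\ell],\,x\le i\}$, so that a candidate $(x,y)$ is dominated iff $R_\ell[x]\le y$; an insertion into $\Len[\ell]$ touches $\Oh{n}$ entries of $R_\ell$. A successful insertion corresponds to a newly witnessed non-dominated pair, of which there are at most $n^2$ per length and hence $\Oh{n^3}$ over all $n$ lengths. Between two consecutive updates of $\Len$ the depth-first search visits $\Oh{n\sigma}$ irrelevant nodes, exactly as in \cref{secBasicTrieTraversal}, giving $\Oh{n^4\sigma}$ total time (the $\Oh{n^4}$ cost of refreshing the $R_\ell$ arrays is dominated). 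The $\Len$ lists and their query arrays over all lengths, together with the navigation indices on $X$ and $Y$, remain within the claimed $\Oh{n^3}$ words of space.
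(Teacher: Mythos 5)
Your proposal follows essentially the same route as the paper: the depth-first traversal of the trie of common pre-Lyndon subsequences with nodes indexed by leftmost end-position pairs, the lists $\Len[\ell]$ of pairwise non-dominated pairs with the $R_\ell$ arrays for constant-time domination tests, and the identical accounting of $n^2$ insertions per list, $\Oh{n\sigma}$ irrelevant nodes between updates, and $\Oh{n^4\sigma}$ total time. Your explicit two-dimensional extension of \cref{lemTakeLongerLyndon} (using domination to guarantee that $VW$ remains a \emph{common} subsequence) is in fact spelled out more carefully than in the paper, which leaves that step implicit.
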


\section{Open Problems}
Since we shed light on the computation of the longest (common) Lyndon subsequence for the very first time,
we are unaware of the optimality of our solutions.
It would be interesting to find non-trivial lower bounds that would justify our rather large time and space complexities.

\subsubsection*{Acknowledgments}
This work was supported by JSPS KAKENHI Grant Numbers
JP20H04141 (HB),
JP19K20213 (TI),
JP21K17701 and JP21H05847 (DK).
TK was supported by NSF 1652303, 1909046, and HDR TRIPODS 1934846 grants, and an Alfred P. Sloan Fellowship.

\end{document}